\numberwithin{equation}{section} 
\newtheorem*{theorem*}{Theorem} \newtheorem{theorem}{Theorem}[section]
 \newtheorem{claim}[theorem]{Claim}
\newtheorem*{proposition*}{Proposition}
\newtheorem{proposition}[theorem]{Proposition}
\DeclareMathOperator{\supp}{supp} \newcommand{\eps}{\epsilon}
\newcommand{\half}{{\textstyle \frac12}}
\newcommand{\Ent}[1]{H\left[{#1}\right]}
\newcommand{\EntSub}[2]{H_{#1}\left[{#2}\right]}
\renewcommand{\P}[1]{{\mathbb{P}}\left[{#1}\right]}
\newcommand{\Psub}[2]{{\mathbb{P}}_{#1}\left[{#2}\right]}
 \newcommand{\R}{\mathbb R}
\newcommand{\C}{\mathbb C} \newcommand{\Z}{\mathbb Z}
 \newcommand{\B}{\{ 0,1 \}}
\begin{document}
\date{} \title{Testing Booleanity and the Uncertainty Principle}
\author{Tom Gur\thanks{Research supported by an Israel Science
    Foundation grant and by the I-CORE Program of the Planning and
    Budgeting Committee and the Israel Science Foundation.} \vspace{-1.5ex}\\
  Weizmann Institute of Science \vspace{-1.5ex}\\
  {\tt tom.gur@weizmann.ac.il} \and Omer Tamuz\thanks{Supported by ISF
    grant 1300/08. Omer Tamuz is a recipient of the Google Europe
    Fellowship in Social Computing, and this research is supported in
    part by this Google
    Fellowship.} \vspace{-1.5ex}\\
  Weizmann Institute of Science \vspace{-1.5ex}\\
  {\tt omer.tamuz@weizmann.ac.il}}

\maketitle
\begin{abstract}
  Let $f:\{-1,1\}^n \to \R$ be a real function on the hypercube, given
  by its discrete Fourier expansion, or, equivalently, represented as
  a multilinear polynomial. We say that it is Boolean if its image is
  in $\{-1,1\}$.

  We show that every function on the hypercube with a sparse Fourier
  expansion must either be Boolean or far from Boolean. In particular,
  we show that a multilinear polynomial with at most $k$ terms must
  either be Boolean, or output values different than $-1$ or $1$ for a
  fraction of at least $2/(k+2)^2$ of its domain.

  It follows that given oracle access to $f$, together with the
  guarantee that its representation as a multilinear polynomial has at
  most $k$ terms, one can test Booleanity using $O(k^2)$ queries. We
  show an $\Omega(k)$ queries lower bound for this problem.
  
  Our proof crucially uses Hirschman's entropic version of
  Heisenberg's uncertainty principle.
\end{abstract}

\newpage
 \parskip=0.2cm \newpage
 
\section{Introduction}
Let $f$ be a function from $\{-1,1\}^n$ to $\R$. Equivalently, one can
consider functions on $\{0,1\}^n$ or $\Z_2^n$, as we do below. A
natural way to represent such a function is as a multilinear
polynomial. For example:
\begin{align*}
  f(x_1,x_2,x_3) = x_1-2x_2x_3+3.5x_1x_2.
\end{align*}
This representation is called the {\em Fourier expansion} of $f$ and
is extremely useful in many applications
(cf.,~\cite{odonnell12analysis}). The coefficients of the Fourier
expansion of $f$ are called the {\it Fourier transform} of $f$. We
denote the Fourier transform by $\hat{f}$, and think of it too as a
function from $\{-1,1\}^n$ to $\R$.

We say that $f$ is Boolean if $f(x)=1$ or $f(x)=-1$ for all $x$ in its
domain. An interesting question in the field of discrete Fourier
analysis of Boolean functions is the following: what does the fact
that $f$ is Boolean tell us about its Fourier transform $\hat{f}$? Is
there a simple characterization of functions that are the Fourier
transform of Boolean functions?

We propose the following observation that lies at the basis of our
proofs: $f$ is Boolean if and only if the convolution (over $\Z_2^n$)
of $\hat{f}$ with itself is equal to the delta function. This follows
from the convolution theorem, as we show below in
Proposition~\ref{thm:boolean-fourier}.

Equipped with this characterization, we consider the question of
determining whether or not $f$ is Boolean. In particular, we consider
the case that we are given black box access to a function $f$,
together with the guarantee that its representation as a multilinear
polynomial has at most $k$ terms, in which case we say that $f$ is
{\em $k$-sparse}. Sparse functions on the hypercube have been the
subject of numerous studies (see,
e.g.,~\cite{nisan1994rank,gopalan2011testing,mansour1994learning}).

We show that $O(k^2)$ queries to $f$ suffice to answer this question
correctly with high probability. This follows from the following
combinatorial result: in Theorem~\ref{thm:bool-is-or-far-from} we show
that if $f$ is not Boolean then it is not Boolean for at least a
$2/(k+2)^2$ fraction of its domain. More generally, we show that for
any set $D \subset \R$ of size $d$, either the image of $f$ is
contained in $D$, or else $f(x) \not \in D$ for at least a
$d!/(k+d)^d$ fraction of the domain of $f$.  We prove an $\Omega(k)$
lower bound for this problem.

Booleanity testing bears resemblance to problems of property testing
of functions on the hypercube (see,
e.g.,~\cite{blais2009,fischer2002,fischer_eldar2002,newman2000}). See
Section~\ref{sec:property testing} below for further discussion.

Our proofs rely on the discrete version of {\em Heisenberg's
  uncertainty principle}. There have been very few applications of the
discrete uncertainty principle in Computer Science, and in fact we are
only familiar with one other such result, concerning circuit lower
bounds~\cite{Jansen06anon-linear}. We expect that more applications
can be found, in particular in cryptography.  See
Sections~\ref{sec:uncertainty} and~\ref{sec:discussion} below for
further discussion.

In the following Section~\ref{sec:main-results} we present our main
results, and in Sections~\ref{sec:boolean-fourier},
\ref{sec:uncertainty}, \ref{sec:property testing}
and~\ref{sec:discussion} we elaborate on the background and relation
to other work, as well as propose a relaxation of our main
claim. Section~\ref{sec:definitions} contains formal definitions, and
proofs appear in Section~\ref{sec:results}.

\subsection{Main results}
\label{sec:main-results}
A function $f :\{-1,1\}^n\to \R$ is $k$-sparse if it can be
represented as a multilinear polynomial with at most $k$ terms. Recall
that we say that $f$ is Boolean if its image is contained in
$\{-1,1\}$.

The following theorem is a combinatorial result, stating that a
function with a sparse Fourier expansion is either Boolean or far from
Boolean.
\begin{theorem}
  \label{thm:bool-is-or-far-from}
  Every $k$-sparse function $f$ is either Boolean, or satisfies
  \begin{equation*}
    \Psub{x}{f(x) \not \in \{-1,1\}} \geq \frac{2}{(k+2)^2}
  \end{equation*}
  where $\Psub{x}{\cdot}$ denotes the uniform distribution over the
  domain of $f$.
\end{theorem}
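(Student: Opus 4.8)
The plan is to recast ``$f$ is Boolean'' as a sparsity condition on a Fourier transform and then let the uncertainty principle trade that Fourier-side sparsity for spatial-side density. First I would record the elementary reformulation that $f$ is Boolean exactly when $f(x)^2=1$ for all $x$, i.e. exactly when the function $g := f^2-\mathbf 1$ vanishes identically. The key point is that, in the spatial domain, $g$ is supported precisely on the inputs witnessing non-Booleanity,
\[
\supp(g) = \{x : f(x)\notin\{-1,1\}\},
\]
so that $\Psub{x}{f(x)\notin\{-1,1\}} = |\supp(g)|/2^n$. It therefore suffices to show that whenever $g\not\equiv 0$ its spatial support occupies at least a $2/(k+2)^2$ fraction of $\Z_2^n$. (This is the concrete reading of the characterization in Proposition~\ref{thm:boolean-fourier}: $f$ is Boolean iff $\hat f * \hat f = \delta$, iff $\hat g = 0$.)

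Next I would bound the support of $\hat g$ in the Fourier domain. Writing $S := \supp(\hat f)$ with $|S|\le k$, the convolution theorem over $\Z_2^n$ gives $\widehat{f^2} = \hat f * \hat f$, whose support is contained in the sumset $S+S = \{s+s' : s,s'\in S\}$. Since $s+s=0$ for every $s$, and distinct unordered pairs contribute at most $\binom{k}{2}$ further sums, we get $|S+S|\le \binom{k}{2}+1$; subtracting $\widehat{\mathbf 1}=\delta$, which is supported at $0\in S+S$, changes only one coordinate, so
\[
|\supp(\hat g)| \le \binom{k}{2}+1 \le \frac{(k+2)^2}{2}.
\]

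The heart of the argument is then the uncertainty principle. Assuming $g\not\equiv 0$ (the only case left, the Boolean case being excluded), I would apply Hirschman's entropic inequality to an $L_2$-normalized copy of $g$: viewing $P(x)\propto|g(x)|^2$ and $Q(\xi)\propto|\hat g(\xi)|^2$ as probability distributions on $\Z_2^n$, it asserts $\Ent{P}+\Ent{Q}\ge n$ (logarithms to base two). Bounding each entropy by the log-cardinality of its support, $\Ent{P}\le\log_2|\supp(g)|$ and $\Ent{Q}\le\log_2|\supp(\hat g)|$, recovers the support--product bound $|\supp(g)|\cdot|\supp(\hat g)|\ge 2^n$, and combining this with the sumset estimate yields
\[
\Psub{x}{f(x)\notin\{-1,1\}} = \frac{|\supp(g)|}{2^n} \ge \frac{1}{|\supp(\hat g)|} \ge \frac{2}{(k+2)^2}.
\]

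The conceptual crux is recognizing that $f^2-\mathbf 1$ is the right object to feed to the uncertainty principle and that its Fourier support is forced to be small by sparsity of $\hat f$; once that is in place the rest is bookkeeping. The main quantitative obstacle is the sumset count controlling $|\supp(\hat g)|$, together with fixing the Fourier normalization so that Hirschman's bound carries the constant $n$ and handling the degenerate case $g\equiv 0$. Finally, I expect this template to yield the general $d$-element version verbatim: replace $g$ by $\prod_{a\in D}(f-a)$, whose Fourier support lies in $\bigcup_{j\le d} jS$ of size at most $\sum_{j=0}^{d}\binom{k}{j}\le\binom{k+d}{d}\le (k+d)^d/d!$, and run the identical uncertainty argument.
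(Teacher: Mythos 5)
Your proposal is correct and follows essentially the same route as the paper: you form $g=f^2-\mathbf 1$ (the paper's $\prod_{i}(f-y_i)$ specialized to $D=\{-1,1\}$), bound $|\supp \hat g|$ by a sumset count on $\supp \hat f$, and apply the entropic uncertainty principle through the support-product bound $|\supp g|\cdot|\supp \hat g|\geq 2^n$, exactly as in Claim~\ref{thm:support-uncertainty} and the proof of Theorem~\ref{thm:is-or-far-from}. The only (harmless) deviation is that your sumset count exploits $s+s=0$ in $\Z_2^n$ to get $\binom{k}{2}+1$, slightly sharper than the paper's multiset bound $\binom{k+2}{2}$, and both imply the stated $2/(k+2)^2$.
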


We in fact prove a more general result:
\begin{theorem}
  \label{thm:is-or-far-from}
  Let $D \subset \R$ be a set with $d$ elements. Then, for any
  $k$-sparse function $f$, one of the following holds.
  \begin{itemize}
  \item Either $\Psub{x}{f(x) \in D} = 1$,
  \item or $\Psub{x}{f(x) \not \in D} \geq \frac{d!}{(k+d)^d}$,
  \end{itemize}
  where $\Psub{x}{\cdot}$ denotes the uniform distribution over the
  domain of $f$.
\end{theorem}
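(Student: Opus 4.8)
The plan is to reduce the statement to the discrete uncertainty principle by encoding the event $f(x) \in D$ as a single polynomial. Define the univariate degree-$d$ polynomial
$$g(y) = \prod_{a \in D}(y - a),$$
so that $g(f(x)) = 0$ if and only if $f(x) \in D$. Set $h = g \circ f : \{-1,1\}^n \to \R$. Then $\supp(h) = \{x : f(x) \not\in D\}$, hence $\Psub{x}{f(x)\not\in D} = |\supp(h)|/2^n$. If $h \equiv 0$ we land in the first case, $\Psub{x}{f(x) \in D} = 1$; otherwise I want to show $\supp(h)$ is \emph{large}, which is exactly where an uncertainty principle should be useful, once I control the Fourier sparsity of $h$.

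The first key step is to bound $|\supp(\hat h)|$. Writing $A = \supp(\hat f)$ with $|A| \le k$, and using $\chi_S \chi_T = \chi_{S \triangle T}$ in $\Z_2^n$, the power $f^j$ has Fourier support inside the $j$-fold sumset of $A$, whose size is at most the number of size-$j$ multisets from $A$, namely $\binom{k+j-1}{j}$. Since $g \circ f$ is a linear combination of $f^0, f^1, \dots, f^d$, its Fourier support lies in the union of these sumsets, and the hockey-stick identity gives
$$|\supp(\hat h)| \;\le\; \sum_{j=0}^d \binom{k+j-1}{j} \;=\; \binom{k+d}{d}.$$

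The second key step is the discrete uncertainty principle: for any nonzero $h : \Z_2^n \to \R$ one has $|\supp(h)| \cdot |\supp(\hat h)| \ge 2^n$. This follows from Hirschman's entropic inequality by bounding the entropy of a distribution supported on a set of size $s$ by $\log s$. Applying it to $h = g\circ f$ (nonzero in the second case) and combining with the sparsity bound yields
$$\Psub{x}{f(x) \not\in D} \;=\; \frac{|\supp(h)|}{2^n} \;\ge\; \frac{1}{\binom{k+d}{d}}.$$
A routine estimate then finishes the proof: $\binom{k+d}{d}^{-1} = d!/\big((k+1)(k+2)\cdots(k+d)\big) \ge d!/(k+d)^d$, since each of the $d$ factors is at most $k+d$.

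I expect the main obstacle to be the sparsity estimate for $g \circ f$. One must recognize that composing a $k$-sparse $f$ with a degree-$d$ polynomial keeps the Fourier support controlled, and that the correct count is the multiset/sumset bound summing to $\binom{k+d}{d}$, rather than the crude $k^d$ one would get by expanding each power independently; getting this constant right is what makes the final bound $d!/(k+d)^d$ come out cleanly. Once this is in hand, the uncertainty principle and the closing arithmetic are essentially plug-and-play.
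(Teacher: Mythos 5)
Your proposal is correct and follows essentially the same route as the paper: the same auxiliary polynomial $\prod_{a\in D}(f-a)$, the same sumset/multiset bound of $\binom{k+d}{d}$ on the Fourier support, the same support-form uncertainty principle $|\supp h|\cdot|\supp \hat h|\ge 2^n$, and the same closing estimate $\binom{k+d}{d}\le (k+d)^d/d!$. The only (cosmetic) difference is bookkeeping: you sum $\binom{k+j-1}{j}$ over $j=0,\dots,d$ via the hockey-stick identity, while the paper adjoins $0$ to $\supp\hat f$ so that all powers land in a single $d$-fold sumset of a set of size $k+1$, yielding the identical count.
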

That is, either $f$'s image is in $D$, or it is far from being in
$D$. In particular, for $D=\{-1,1\}$ (or $\{0,1\}$, or any other set
of size two), this theorem reduces to
Theorem~\ref{thm:bool-is-or-far-from}

An immediate consequence of Theorem~\ref{thm:bool-is-or-far-from} is
the following result.
\begin{theorem}
  \label{thm:k-algo}
  For every $\eps>0$ there exists a randomized algorithm with query
  (and time) complexity $O(k^2\log(1/\eps))$ that, given $k$ and
  oracle access to a $k$-sparse function $f$,
  \begin{itemize}
  \item returns {\em true} if $f$ is Boolean, and
  \item returns {\em false} with probability at least $1-\eps$ if $f$
    is not Boolean.
  \end{itemize}
\end{theorem}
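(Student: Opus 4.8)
The plan is to derive Theorem~\ref{thm:k-algo} from the combinatorial gap guaranteed by Theorem~\ref{thm:bool-is-or-far-from} via a standard Monte Carlo amplification. The algorithm I would use is the obvious one: draw $m$ independent uniformly random points $x^{(1)},\dots,x^{(m)}$ from the domain $\{-1,1\}^n$, query the oracle for $f(x^{(i)})$ at each, and return \emph{false} as soon as some queried value lies outside $\{-1,1\}$; if every queried value lies in $\{-1,1\}$, return \emph{true}. The number $m$ of samples (and hence of queries) is chosen as a function of $k$ and $\eps$ below.

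First I would verify the one-sided error. If $f$ is Boolean then every value $f(x^{(i)})$ lies in $\{-1,1\}$ by definition, so the algorithm never outputs \emph{false} and thus always (correctly) returns \emph{true}. Next, suppose $f$ is not Boolean. Since $f$ is $k$-sparse, Theorem~\ref{thm:bool-is-or-far-from} applies and gives $\Psub{x}{f(x)\notin\{-1,1\}}\ge p$ with $p := 2/(k+2)^2$. Hence each independent sample fails to hit a witness with probability at most $1-p$, and the probability that all $m$ samples miss --- that is, that the algorithm erroneously returns \emph{true} --- is at most $(1-p)^m \le e^{-pm}$. Choosing $m = \lceil p^{-1}\ln(1/\eps)\rceil = \lceil \tfrac{(k+2)^2}{2}\ln(1/\eps)\rceil$ makes this at most $\eps$, as required, and uses $m = O(k^2\log(1/\eps))$ queries.

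Finally I would account for the running time: each of the $m$ iterations consists of generating one uniform domain point and making a single oracle query, together with a constant-time comparison of the returned real value against $\{-1,1\}$, so the time complexity matches the query complexity up to the cost of sampling a point, giving $O(k^2\log(1/\eps))$. I do not anticipate a genuine obstacle here, since all of the mathematical content is already packaged in Theorem~\ref{thm:bool-is-or-far-from}; the only points requiring care are that the error is inherently one-sided (Boolean inputs are never rejected) and that the amplification uses the clean inequality $(1-p)^m\le e^{-pm}$ to convert the constant gap $p=\Theta(1/k^2)$ into the stated sample bound. The one modelling subtlety worth flagging is whether drawing a uniform point of $\{-1,1\}^n$ is charged as $O(1)$ or $O(n)$ time; under the convention that oracle queries dominate, the bound is exactly as stated.
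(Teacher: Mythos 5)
Your proposal is correct and matches the paper's proof: the paper likewise samples $f$ uniformly at random $\half(k+2)^2\ln(1/\eps)$ times and rejects upon seeing any value outside $\{-1,1\}$, using the gap $2/(k+2)^2$ from Theorem~\ref{thm:bool-is-or-far-from} exactly as you do. Your write-up merely spells out the amplification bound $(1-p)^m \le e^{-pm}$ and the one-sided-error observation, which the paper leaves implicit.
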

This result can easily be extended to test whether the image of a
function on the hypercube is contained in any finite set, using
Theorem~\ref{thm:is-or-far-from}.

We prove the following lower bound:
\begin{theorem}
  \label{thm:k-lower-bound}
  Let $A$ be a randomized algorithm that, given $k$ and oracle access
  to a $k$-sparse function $f$,
  \begin{itemize}
  \item returns {\em true} with probability at least $2/3$ if $f$ is
    Boolean, and
  \item returns {\em false} with probability at least $2/3$ if $f$ is
    not Boolean.
  \end{itemize}
  Then $A$ has query complexity $\Omega(k)$.
\end{theorem}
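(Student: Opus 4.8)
The plan is to prove the lower bound via Yao's minimax principle: I will exhibit a single distribution $\mu$ over $k$-sparse instances on which \emph{every} deterministic algorithm making fewer than, say, $k/3$ queries answers correctly with probability strictly below $2/3$. Since any randomized algorithm is a distribution over deterministic ones, this bounds its worst-case success and forces query complexity $\Omega(k)$. The distribution $\mu$ is an equal mixture of a ``yes'' part and a ``no'' part. For the yes part I simply take the constant function $f \equiv 1$, which is Boolean and $1$-sparse. For the no part, set $t = \lfloor \log_2 k\rfloor$, pick a uniformly random sign pattern $b \in \{-1,1\}^{t}$, let $V = \{x \in \{-1,1\}^n : x_1 = b_1, \dots, x_t = b_t\}$ be the corresponding subcube (a random affine subspace of codimension $t$), and take $f = 1 - \mathbf{1}_V$. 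This $f$ takes the value $0$ on $V$ and $1$ elsewhere, so its image is $\{0,1\} \not\subseteq \{-1,1\}$ and it is genuinely non-Boolean.

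First I would check that both instances fit the model. Writing $\mathbf{1}_V(x) = \prod_{j=1}^{t} \tfrac{1 + b_j x_j}{2}$ and expanding, $\mathbf{1}_V$ is a multilinear polynomial with exactly $2^{t}$ monomials, so $f = 1 - \mathbf{1}_V$ is $2^{t}$-sparse with $2^{t} \le k$; both the yes- and no-instances are therefore $k$-sparse. Next I would record the elementary fact that, for a fixed point $x$, since $b$ is uniform we have $\Psub{b}{x \in V} = 2^{-t} \ge 1/k$, independently of $x$.

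The heart of the argument is to handle adaptivity, which is the main obstacle: an adaptive algorithm may choose each query based on previous answers. The key observation that dissolves this difficulty is that, as long as every queried point lies outside $V$, the algorithm receives the answer $1$ at every step and hence behaves \emph{exactly} as it does on the constant instance $f\equiv 1$. Consequently there is a fixed, $V$-independent sequence of queries $x_1,\dots,x_q$ — the ``all-ones transcript'' — that the algorithm follows unless and until it lands in $V$. Thus the event that the algorithm ever sees a $0$ equals the event $\{\exists i : x_i \in V\}$ for these predetermined points, and by a union bound its probability is at most $\sum_{i=1}^{q}\Psub{b}{x_i \in V} = q\,2^{-t} \le q/k$.

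Finally I would assemble these pieces into the success bound. Conditioned on never hitting $V$, the algorithm's output is a fixed bit identical to its output on the yes-instance, so it cannot simultaneously be correct on both branches except through the $\le q/k$ probability of detecting $V$. A short case analysis on this fixed output bit gives, for the mixture $\mu$,
\begin{equation*}
  \Psub{\mu}{\text{algorithm correct}} \;\le\; \frac12 + \frac12\cdot\frac{q}{k}.
\end{equation*}
Requiring this to be at least $2/3$ yields $q \ge k/3$, which establishes the claimed $\Omega(k)$ lower bound. The only subtlety to be careful about in the write-up is justifying that the ``all-ones transcript'' is well-defined and $V$-independent (so that the union bound applies to \emph{fixed} points), and noting that repeated or fewer distinct queries only strengthen the bound.
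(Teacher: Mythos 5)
Your proposal is correct and takes essentially the same approach as the paper: both proofs hide a single non-Boolean value on one of roughly $k$ subcubes determined by the first $\log_2 k$ coordinates, so that any algorithm making $o(k)$ queries detects it with probability $o(1)$ --- your Yao's-principle plus ``all-ones transcript'' formulation simply makes explicit the adaptivity argument that the paper treats in one sentence. One cosmetic slip: since $t=\lfloor\log_2 k\rfloor$ only guarantees $2^t > k/2$, the union bound yields $q\,2^{-t}\le 2q/k$ rather than $q/k$, which changes the final constant (e.g.\ $q \ge k/6$) but not the $\Omega(k)$ conclusion.
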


\subsection{The Fourier transform of Boolean functions}
\label{sec:boolean-fourier}

Let $f,g$ be functions from $\Z_2^n$ to $\R$. Their convolution $f*g$
is also a function from $\Z_2^n$ to $\R$ defined by
\begin{align*}
  [f*g](x) = \sum_{y \in \Z_2^n}f(y)g(x+y),
\end{align*}
where the addition ``$x+y$'' is done using the group operation of
$\Z_2^n$. Note that the convolution operator is both associative and
distributive.

An observation that lies at the basis of our proofs is a
characterization of the Fourier transforms of Boolean functions:
$\hat{f} : \Z_2^n \to \R$ is the Fourier transform of a Boolean
function if its convolution with itself is equal to the delta
function; that is,
\begin{align*}
  \hat{f} * \hat{f} = \delta
\end{align*}
(where $\delta: \Z_2^n \to \B$ is given by $\delta(0)=1$, and
$\delta(x)=0$ for every $x \neq 0$).

This is our Proposition~\ref{thm:boolean-fourier}; it follows from the
convolution theorem (see, e.g.,\cite{katznelson2004introduction}).
Equivalently, given a function $f$ on $\Z_2^n$, one can shift it by
acting on it with $x \in \Z_2^n$ by $[xf](y) = f(x+y)$. Hence the
observation above can be stated as follows: If and only if a function
is orthogonal to its shifted self, for all non-zero shifts in
$\Z_2^n$, then it is the Fourier transform of a Boolean function.

\subsection{The uncertainty principle}
\label{sec:uncertainty}
A distribution over a discrete domain $S$ is often represented as a
non-negative function $f: S \to \R^+$ which is normalized in $L_1$,
i.e., $\sum_{x \in S}f(x)=1$.

In Quantum Mechanics the state of a particle on a domain $S$ is
represented by a {\em complex} function on $S$, and the probability to
find the particle in a particular $x \in S$ is equal to
$|f(x)|^2$. Accordingly, $f$ is normalized in $L_2$, so that $\sum_{x
  \in S}|f(x)|^2=1$.

Often, the domain $S$ is taken to be $\R$ (or some power thereof). In
this continuous case one represents the state of a particle by a
function $f:\R \to \C$ such that $\int_{x \in \R}|f(x)|^2dx =1$, and
then $|f(x)|^2$ is the probability density function of the
distribution of the particle's position. The Fourier transform of $f$,
denoted by $\hat{f}$, is then also normalized in $L_2$ (if one chooses
the Fourier transform operator to be unitary), and $|\hat{f}(x)|^2$ is
the probability density function of the {\em distribution of the
  particle's momentum}.

The Heisenberg uncertainty principle states that the variance of a
particle's position times the variance of its momentum is at least one
- under an appropriate choice of units. Besides its physical
significance, this is also a purely mathematical statement relating a
function on $\R$ to its Fourier transform.

Hirschman~\cite{hirschman1957note} conjectured in 1957 a stronger
entropic form, namely
\begin{align*}
  H_e\Big[f\Big]+\EntSub{e}{\hat{f}} \geq 1-\ln 2,
\end{align*}
where $\EntSub{e}{f} = -\int_{x \in \R} |f(x)|^2\ln |f(x)|^2dx$ is the
differential entropy of $f$. This was proved nearly twenty years later
by Beckner~\cite{beckner1975inequalities}.

When the domain $S$ is $\Z_2^n$ (equivalently, $\{-1,1\}^n$) then a
similar inequality holds, but with a different constant. Let $f:\Z_2^n
\to \C$ have Fourier transform $\hat{f}:\Z_2^n \to \C$. Then
\begin{align*}
  H\Bigg[\frac{f}{||f||}\Bigg]+\Ent{\frac{\hat{f}}{||\hat{f}||}} \geq
  n.
\end{align*}
where $\Ent{f} = -\sum_{x \in \Z_2^n}|f(x)|^2\log_2|f(x)|^2$, and
$\|f\| = \sqrt{\sum_{x \in \Z_2^n}f(x)^2}$. (For a further discussion
on the foregoing inequality, see
Section~\ref{sec:uncertainty-principle}.)

\subsection{Relation to property testing}
\label{sec:property testing}
We note that the problem of testing Booleanity is similar in structure
to a property testing problem. Since its introduction in the seminal
paper by Rubinfeld and Sudan \cite{rubinfeld_sudan1996}, property
testing has been studied extensively, both due to its theoretical
importance, and the wide range of applications it has spanned
(cf.~\cite{goldreich2010,goldreich1996}). In particular, property
testing of functions on the hypercube is an active area of
research~\cite{blais2009,fischer2002,fischer_eldar2002,newman2000}.

A typical formulation of property testing is as follows: Given a fixed
property $P$ and an input $f$, a property tester is an algorithm that
distinguishes with high probability between the case that $f$
satisfies $P$, and the case that $f$ is $\eps$-far from satisfying it,
according to some notion of distance.

The algorithm we present for testing Booleanity given oracle access is
similar to a property testing algorithm. However, in our case there is
no proximity parameter: we show that if a function is not Boolean then
it {\it must} be far from Boolean, and can therefore be proved to not
be Boolean by a small number of queries.  This type of property
testing algorithms have appeared in the context of the study of
adaptive versus non-adaptive testers \cite{gonen_ron10}.

\subsection{Discussion and open questions}
\label{sec:discussion}
In this paper we use a discrete entropy uncertainty principle to prove
a combinatorial statement concerning functions on the hypercube.  To
the best of our knowledge, this is the first time this tool has been
used in the context of theoretical computer science, outside of
circuit lower bounds.

We note that Theorem~\ref{thm:bool-is-or-far-from} and
Theorem~\ref{thm:is-or-far-from} are, in a sense, a dual to the {\em
  Schwartz-Zippel} lemma~\cite{zippel1989explicit,schwartz1980fast}:
both limit the number of roots of a polynomial, given that it is
sparse. Given the usefulness of the Schwartz-Zippel lemma, we suspect
that more combinatorial applications can be found for the discrete
uncertainty principle.

For example, Biham, Carmeli and Shamir~\cite{biham2008bug} show that
an RSA decipherer who uses hardware that has been maliciously altered
can be vulnerable to an attack resulting in the revelation of the
private key. The assumption is that the dechiperer is not able to
discover that it is using faulty hardware, because the altered
function returns a faulty output for only a very small number of
inputs. The uncertainty principle shows that such malicious alteration
is impossible to accomplish with succinctly represented functions:
when the Fourier transform of a function is sparse then it is
impossible to ``hide'' elements in its image.

As for the scope of this study, many questions still remains open. In
particular, there is a gap between the lower bound and the upper bound
for testing Booleanity with oracle access; we are disinclined to guess
which of the two is not tight.

A natural extension of our results is to functions with a Fourier
transform $\hat{f}$ that is not restricted to having support of size
$k$, but rather having {\em entropy} $\log k$; the latter is a natural
relaxation of the former. Unfortunately, we have not been able to
generalize our results given this constraint. However, another natural
constraint which does yield a generalization is the requirement that
the entropy of $\hat{f}*\hat{f}$, the convolution of the Fourier
transform with itself, is at most $2\log k$. See
Proposition~\ref{thm:conv-support} for why this is indeed natural.

Two additional amendments are needed to be added for
Theorem~\ref{thm:bool-is-or-far-from} for it to be thus
generalized. First, we require that $|f|^2=2^n$. Next, recall that we
call a function $f$ Boolean if $f^2=1$. We likewise say that $f$ is
$\eps$-close to being Boolean if
\begin{align*}
  \sqrt{\frac{1}{2^n}\sum_{x \in \Z_2^n}(f(x)^2-1)^2} \leq \eps.
\end{align*}
This is simply the $L_2$ distance of $f^2$ from the constant function
$1$. In the following theorem we do not test for Booleanity, but for
$\eps$-closeness to Booleanity.
\begin{theorem}
  \label{conjecture:entropy}
  Let $\Ent{\frac{\hat{f}*\hat{f}}{\|\hat{f}*\hat{f}\|}} \leq 2\log
  k$, and let $\|f\|^2=2^n$. Then $f$ is either $\eps$-close to
  Boolean, or satisfies
  \begin{equation*}
    \Psub{x}{f(x) \not \in \{-1,1\}} = \Omega\left(\frac{1}{
        k^{2(\eps^2+1)/\eps^2}}\right)
  \end{equation*}
  where $\Psub{x}{\cdot}$ denotes the uniform distribution over the
  domain of $f$.
\end{theorem}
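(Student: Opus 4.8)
The plan is to reduce everything to the squared function $g = f^2$ and then apply the discrete entropic uncertainty principle to $h = g - 1 = f^2 - 1$, whose support is exactly the set $\{x : f(x) \not\in \{-1,1\}\}$ that we wish to lower-bound. First I would fix the normalizations. By the convolution theorem (cf.\ Proposition~\ref{thm:boolean-fourier}), $\hat f * \hat f$ is a scalar multiple of $\widehat{f^2} = \hat g$, so the two agree after $L_2$-normalization; since the entropy $\Ent{\cdot}$ depends only on $|\cdot|^2$, the hypothesis transfers cleanly to $\Ent{\hat g/\|\hat g\|} \leq 2\log k$. Working with the unitary transform (so Parseval gives $\|\hat g\| = \|g\|$), the assumption $\|f\|^2 = 2^n$ says $\sum_x g(x) = 2^n$, whence $g$ has mean $1$ and $\hat g(0) = 2^{n/2}$, i.e.\ $|\hat g(0)|^2 = 2^n$.

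Next I would pass to $h = g - 1$. Since $h$ has mean $0$, its transform satisfies $\hat h(0) = 0$ and $\hat h(S) = \hat g(S)$ for $S \neq 0$, so that $\|h\|^2 = \|\hat h\|^2 = \|\hat g\|^2 - 2^n$. Writing $M = \|\hat g\|^2$ and $p_0 = |\hat g(0)|^2/M = 2^n/M$, the normalized distribution $|\hat h(S)|^2/\|\hat h\|^2$ is precisely the distribution $p(S) = |\hat g(S)|^2/M$ conditioned on $S \neq 0$. The grouping identity for Shannon entropy then gives $\Ent{\hat g/\|\hat g\|} = H_b(p_0) + (1-p_0)\,\Ent{\hat h/\|\hat h\|}$, where $H_b$ is the binary entropy, and since $H_b(p_0) \geq 0$,
\[
  \Ent{\hat h/\|\hat h\|} \;\leq\; \frac{\Ent{\hat g/\|\hat g\|}}{1-p_0} \;\leq\; \frac{2\log k}{1-p_0}.
\]

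The failure of $\eps$-closeness, $\tfrac{1}{2^n}\sum_x (f(x)^2-1)^2 > \eps^2$, is exactly $\|h\|^2 > \eps^2 2^n$, i.e.\ $M - 2^n > \eps^2 2^n$, which forces $p_0 = 2^n/M < 1/(1+\eps^2)$ and hence $1 - p_0 > \eps^2/(1+\eps^2)$. Substituting yields $\Ent{\hat h/\|\hat h\|} < \tfrac{2(1+\eps^2)}{\eps^2}\log k$. Finally I would invoke the discrete entropic uncertainty principle of Section~\ref{sec:uncertainty} applied to $h$, together with the elementary bound $\Ent{h/\|h\|} \leq \log_2 |\supp h|$, to conclude
\[
  \log_2 |\supp h| \;\geq\; n - \Ent{\hat h/\|\hat h\|} \;>\; n - \frac{2(1+\eps^2)}{\eps^2}\log k,
\]
so that $\Psub{x}{f(x)\not\in\{-1,1\}} = |\supp h|/2^n > k^{-2(\eps^2+1)/\eps^2}$, matching the claimed bound.

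The step I expect to be delicate is the entropy bookkeeping around the origin: the whole argument hinges on controlling how much mass of $p$ sits on the single coordinate $S = 0$, since that atom carries the mean of $g$ and is precisely what is deleted in passing to $h$. The non-closeness hypothesis is used exactly to keep $p_0$ bounded away from $1$; as $f$ approaches Boolean one has $p_0 \to 1$, the factor $1/(1-p_0)$ blows up, and the support bound correctly degrades, which is why one cannot test genuine (as opposed to $\eps$-approximate) Booleanity under a mere entropy constraint. I would also double-check that the convolution-theorem scalar truly cancels in the $L_2$-normalized entropy and that discarding $H_b(p_0)\geq 0$ is the only slack introduced in the entropy manipulation.
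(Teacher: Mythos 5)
Your proposal is correct and takes essentially the same route as the paper: you pass to $h=f^2-1$, use the far-from-Boolean hypothesis to bound the Fourier mass at the origin, control $\Ent{\hat h/\|\hat h\|}$ by conditioning away the atom at $0$ (your grouping identity, with $H_b(p_0)\geq 0$ discarded, is exactly the paper's Proposition~\ref{thm:H-H}), and conclude via the entropic uncertainty principle together with the support--entropy bound of Claim~\ref{thm:support-uncertainty}. The only differences are cosmetic (you work with the unitary normalization of the transform, while the paper's conventions make $\hat f * \hat f = \widehat{f^2}$ exactly, so $\hat f^{(2)}(0)=1$ rather than $2^{n/2}$), and your bookkeeping lands on the exponent $k^{2(\eps^2+1)/\eps^2}$ as stated in the theorem, incidentally correcting a typo ($k^{2(\eps^2+1)/\eps}$) in the paper's own proof.
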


We prove this Theorem in Section~\ref{sec:conditional-proof}.

\section{Definitions}
\label{sec:definitions}
The following definitions are mostly standard. We deviate from common
practice by considering both a function and its Fourier transform to
be defined on the same domain, namely $\Z_2^n$. Some readers might
find $\{0,1\}^n$ or $\{-1,1\}^n$ a more familiar domain for a
function, and likewise the power set of $[n]$ a more familiar domain
for its Fourier transform.

Denote $\Z_2=\Z/2\Z$. For $x,y \in \Z_2^n$ we denote by $x+y$ the sum
using the $\Z_2^n$ group operation. The equivalent operation in
$\{-1,1\}^n$ is pointwise multiplication (i.e., $xy = (x_1y_1, \ldots,
x_ny_n)$).

Let $f:\Z_2^n \to \R$. We denote its $L_2$-norm by
\begin{align}
  \label{eq:norm}
  \|f\| = \sqrt{\sum_{x \in \Z_2^n}f(x)^2},
\end{align}
denote its support by
\begin{align}
  \label{eq:support}
  \supp f = \{x \in \Z_2^n \::\: f(x) \neq 0\},
\end{align}
and denote its entropy by
\begin{align}
  \label{eq:entropy}
  \Ent{f} = -\sum_{x \in \Z_2^n}f(x)^2\log f(x)^2,
\end{align}
where logarithms are base two and $0\log 0 = 0$, by the usual
convention in this case. We remark that for the simplicity of the
presentation, we define norms and convolutions using summation rather
than expectation.

We call a function $f:\Z_2^n \to \R$ {\em Boolean} if its image is in
$\{-1,1\}$, i.e., if $f(x) \in \{-1,1\}$ for all $x \in \Z_2^n$.

Let $\hat{f}:\Z_2^n \to \R$ denote the {\em discrete Fourier
  transform} (also known as the {\em Walsh-Fourier transform} and {\em
  Hadamard transform}) of $f$, or its representation as a multilinear
polynomial:
\begin{align}
  \label{eq:fourier}
  \hat{f}(x) = \frac{1}{2^n}\sum_{y \in \Z_2^n}f(y)\chi_y(x),
\end{align}
where the characters $\chi_y$ are defined by
\begin{align*}
  \chi_y(x) =
  \begin{cases}
    -1 & \sum_{i:y_i=1}x_i = 1\\
    1 & \mbox{otherwise}
  \end{cases}.
\end{align*}
Note that the sum $\sum_{i:y_i=1}x_i$ is over $\Z_2$ and that
$x_i,y_i$ are (respectively) the $i$'th coordinate of $x$ and $y$.  It
follows that the {\em discrete Fourier expansion} of $f$ is
\begin{align}
  \label{eq:inv-fourier}
  f(x) = \sum_{y \in \Z_2^n}\hat{f}(y)\chi_y(x).
\end{align}
Note that this is a representation of $f$ as a multilinear polynomial.
Hence $f:\Z_2^n \to \R$ is $k$-sparse if $|\supp \hat{f}| \leq k$.

We define $\delta : \Z_2^n \to \R$ by
\begin{align*}
  \delta(x) =
  \begin{cases}
    1& \mbox{when } x = (0,\ldots,0)\\
    0& \mbox{otherwise}
  \end{cases}.
\end{align*}
If we denote by ${\bf 1}(x): \Z_2^n \to \R$ the constant function such
that ${\bf 1}(x)=1$ for all $x \in \Z_2^n$, then it is easy to verify
that
\begin{align}
  \label{eq:delta-transform}
  \hat{\bf 1}=\delta.
\end{align}

Given functions $f,g : \Z_2^n \to \R$, their {\em convolution} $f * g$
is also a function from $\Z_2^n$ to $\R$, defined by
\begin{align}
  \label{eq:convolution}
  [f*g](x) = \sum_{y \in \Z_2^n}f(y)g(x+y).
\end{align}
We denote
\begin{align*}
  f^{(2)} = f * f,
\end{align*}
and more generally $f^{(k)}$ is the convolution of $f$ with itself $k$
times. $f^{(0)}$ is taken to equal $\delta$, since $f * \delta = f$.

\section{Proofs}
\label{sec:results}
\subsection{The Fourier transform of Boolean functions}
  
The convolution theorem (see.,
e.g.,~\cite{katznelson2004introduction}) for $\Z_2^n$ states that, up
to multiplication by a constant, the Fourier transform of the
pointwise multiplication of two functions is equal to the convolution
of their Fourier transforms, and that likewise the Fourier transform
of a convolution is the product of the Fourier transforms (again up to
a constant):
\begin{align}
  \label{eq:convolutions}
  \widehat{f \cdot g} = \hat{f} *
  \hat{g},\quad\quad\mbox{and}\quad\quad \widehat{f * g} = 2^n\hat{f}
  \cdot \hat{g}.
\end{align}
The correctness of the constants can be verified by, for example,
setting $f=g={\bf 1}$. The following proposition follows from
Eqs.~\ref{eq:delta-transform} and~\ref{eq:convolutions}.

\begin{proposition}
  \label{thm:boolean-fourier}
  $f :\Z_2^n \to \R$ is Boolean iff $\hat{f}*\hat{f}=\delta$.
\end{proposition}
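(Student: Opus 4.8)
The plan is to prove this as a direct consequence of the convolution theorem in the form stated in Equation~\ref{eq:convolutions}, together with the fact that $f$ is Boolean exactly when $f \cdot f = {\bf 1}$. The key observation is that ``$f$ has image in $\{-1,1\}$'' is equivalent to the algebraic identity $f(x)^2 = 1$ for every $x \in \Z_2^n$, i.e. the pointwise product $f \cdot f$ equals the constant function ${\bf 1}$.

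First I would record this reformulation: $f$ is Boolean $\iff$ $f \cdot f = {\bf 1}$, which is immediate since $f(x) \in \{-1,1\}$ is the same as $f(x)^2 = 1$, and this must hold at every point of the domain. Next I would apply the Fourier transform to both sides of $f \cdot f = {\bf 1}$. Using the convolution theorem $\widehat{f \cdot g} = \hat{f} * \hat{g}$ (the left identity in Equation~\ref{eq:convolutions}) with $g = f$, the left-hand side transforms to $\widehat{f \cdot f} = \hat{f} * \hat{f}$. For the right-hand side I would invoke Equation~\ref{eq:delta-transform}, namely $\hat{\bf 1} = \delta$. Putting these together, $f \cdot f = {\bf 1}$ transforms into $\hat{f} * \hat{f} = \delta$.

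To get an ``iff'' rather than just one direction, the crucial point is that the Fourier transform is a bijection (an invertible linear operator) on functions $\Z_2^n \to \R$; this is exactly what Equation~\ref{eq:inv-fourier} expresses, since it recovers $f$ from $\hat{f}$. Hence $f \cdot f = {\bf 1}$ holds if and only if their Fourier transforms agree, i.e. if and only if $\hat{f} * \hat{f} = \delta$. Chaining the two equivalences gives
\begin{equation*}
  f \text{ is Boolean} \iff f \cdot f = {\bf 1} \iff \hat{f} * \hat{f} = \delta,
\end{equation*}
which is the claim.

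I do not expect any real obstacle here, since the statement is essentially a repackaging of the convolution theorem; the only point requiring a small amount of care is making sure the constant in $\widehat{f \cdot f} = \hat{f} * \hat{f}$ is correct (it carries no factor of $2^n$, unlike the dual identity $\widehat{f * g} = 2^n \hat{f} \cdot \hat{g}$) and that $\hat{\bf 1} = \delta$ rather than a scaled delta. The normalization conventions fixed in Equations~\ref{eq:fourier} and~\ref{eq:inv-fourier} are exactly what make the constants work out, and the excerpt already notes that they can be checked by setting $f = g = {\bf 1}$, so I would simply cite that verification rather than recompute it.
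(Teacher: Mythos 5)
Your proof is correct and follows exactly the route the paper takes: the paper simply states that Proposition~\ref{thm:boolean-fourier} ``follows from Eqs.~\ref{eq:delta-transform} and~\ref{eq:convolutions},'' and your argument---rewriting Booleanity as $f \cdot f = {\bf 1}$, applying $\widehat{f \cdot f} = \hat{f} * \hat{f}$ and $\hat{\bf 1} = \delta$, and invoking invertibility of the Fourier transform for the ``iff''---is precisely that derivation spelled out in full, including the correct attention to normalization constants.
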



\subsection{The discrete uncertainty principle}
\label{sec:uncertainty-principle}

The discrete uncertainty principle for $\Z_2^n$ is the following. It
is a straightforward consequence of Theorem 23 in Dembo, Cover and
Thomas~\cite{dembo1991information}; we provide the proof for
completeness, since it does not seem to have previously appeared in
the literature.
\begin{theorem}
  \label{thm:uncertainty}
  For any non-zero function $f:\Z_2^n \to \R$ (i.e., $\|f\| > 0$) it
  holds that
  \begin{align}
    \label{eq:uncertainty}
    H\Bigg[\frac{f}{||f||}\Bigg]+\Ent{\frac{\hat{f}}{||\hat{f}||}}
    \geq n.
  \end{align}
\end{theorem}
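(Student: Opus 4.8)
The plan is to derive Theorem~\ref{thm:uncertainty} from a Hausdorff--Young inequality for the transform on $\Z_2^n$ by a Beckner-style differentiation at the exponent $p=2$; this is exactly the mechanism behind Theorem~23 of~\cite{dembo1991information}, specialized to the finite group. First I would reduce to the normalized case: replacing $f$ by $cf$ replaces $\hat f$ by $c\hat f$ and leaves both $f/\|f\|$ and $\hat f/\|\hat f\|$ unchanged, so I may assume $\|f\|=1$. It then suffices to show $\EntSub{e}{f}+\EntSub{e}{\hat f/\|\hat f\|}\ge n\ln 2$, where $\EntSub{e}{g}=-\sum_x (g(x)^2/\|g\|^2)\ln(g(x)^2/\|g\|^2)$ is the natural-logarithm entropy; dividing by $\ln 2$ returns the base-two statement~\eqref{eq:uncertainty}.

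The first ingredient is the Hausdorff--Young inequality
\[
  \|\hat f\|_q \le 2^{-n/p}\,\|f\|_p, \qquad 1\le p\le 2,\ \tfrac1p+\tfrac1q=1,
\]
where $\|g\|_p=(\sum_x|g(x)|^p)^{1/p}$. I would obtain it by Riesz--Thorin interpolation between two endpoint bounds that are immediate from the definition~\eqref{eq:fourier}: the $L^1\to L^\infty$ bound $\|\hat f\|_\infty\le 2^{-n}\|f\|_1$ (triangle inequality, using $|\chi_y(x)|=1$), and the $L^2\to L^2$ bound $\|\hat f\|_2=2^{-n/2}\|f\|_2$ (Parseval, which follows from orthogonality of the characters). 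Interpolating the operator norms $2^{-n}$ and $2^{-n/2}$ with the parameter $\theta=2/q$ gives the constant $2^{-n(1-\theta/2)}=2^{-n/p}$, since $1-\theta/2=1/p$. The asymmetric constant here is precisely what produces the additive $n$ in the final bound, and keeping it exact is the crux of the bookkeeping.

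With Hausdorff--Young in hand, I would set $q(p)=p/(p-1)$ and study
\[
  \Psi(p) \;=\; \ln\|f\|_p \;-\; \tfrac{n}{p}\ln 2 \;-\; \ln\|\hat f\|_{q(p)}.
\]
The inequality says $\Psi(p)\ge 0$ on $[1,2]$, while at $p=2$ both sides agree (there $q=2$ and Parseval gives equality), so $\Psi(2)=0$. Hence $p=2$ is a right-endpoint minimum and the one-sided derivative satisfies $\Psi'(2^-)\le 0$. The standard computation for a unit-norm $g$ gives $\frac{d}{dr}\ln\|g\|_r\big|_{r=2}=-\tfrac14 \EntSub{e}{g}$, obtained by differentiating $\frac1r\ln\sum_x|g(x)|^r$ and using $\sum_x g(x)^2=1$. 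Applying this to $\|f\|_p$ (unit norm) and, via the chain rule with $q'(2)=-1$, to $\|\hat f\|_{q(p)}=\|\hat f\|_2\,\|h\|_{q(p)}$ for the normalized $h=\hat f/\|\hat f\|$, together with $\frac{d}{dp}[-\tfrac{n}{p}\ln 2]\big|_{p=2}=\tfrac{n}{4}\ln 2$, turns $\Psi'(2)\le 0$ into $-\tfrac14\EntSub{e}{f}+\tfrac{n}{4}\ln2-\tfrac14\EntSub{e}{\hat f/\|\hat f\|}\le 0$, which is the desired inequality after multiplying by $-4$ and dividing by $\ln 2$.

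The main obstacle is not any single inequality but the combined bookkeeping: the transform~\eqref{eq:fourier} is not $L^2$-unitary (it carries a factor $2^{-n}$ while norms use plain summation), so the Hausdorff--Young constant, the non-unit normalization $\|\hat f\|_2=2^{-n/2}$, and the chain-rule factor $q'(2)=-1$ must all be tracked precisely, and the one-sided derivative sign at the endpoint must be argued correctly. As a consistency check I would verify the two equality cases $f=\delta$ and $f=\chi_z$: the former has $\hat f$ constant, giving entropies $0$ and $n$, while the latter has $\hat f$ concentrated at a single point, giving $n$ and $0$; both saturate~\eqref{eq:uncertainty}, confirming that the constant is sharp.
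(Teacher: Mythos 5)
Your proof is correct, but it is written out at a level the paper deliberately black-boxes, so the two arguments read quite differently even though they share a mathematical core. The paper's proof is two lines: it invokes Theorem~23 of Dembo, Cover and Thomas verbatim --- for any unitary $U$ with $M=\max_{ij}|u_{ij}|$ and any nonzero $x$,
\begin{align*}
  \Ent{\frac{x}{\|x\|}}+\Ent{\frac{Ux}{\|Ux\|}} \geq 2\log(1/M)
\end{align*}
--- and then merely observes that the Fourier matrix of Eq.~\ref{eq:fourier}, rescaled by $\sqrt{2^n}$, is unitary with all entries $\pm 2^{-n/2}$, so $M=2^{-n/2}$ and the right-hand side is $n$. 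You instead reprove that cited result in the case at hand: Riesz--Thorin interpolation between the $L^1\to L^\infty$ bound $2^{-n}$ and the Parseval identity $\|\hat f\|_2 = 2^{-n/2}\|f\|_2$ yields Hausdorff--Young with the exact constant $2^{-n/p}$, and differentiating $\Psi(p)=\ln\|f\|_p-\tfrac{n}{p}\ln 2-\ln\|\hat f\|_{q(p)}$ at the right-endpoint minimum $p=2$ yields the entropic bound. I checked your bookkeeping and it is right: $\Psi\ge 0$ on $[1,2]$ with $\Psi(2)=0$ forces $\Psi'(2^-)\le 0$; the unit-norm derivative $\frac{d}{dr}\ln\|g\|_r\big|_{r=2}=-\tfrac14\EntSub{e}{g}$, the chain-rule factor $q'(2)=-1$ (which flips the sign on the $\hat f$ term), and $\frac{d}{dp}\bigl[-\tfrac{n}{p}\ln 2\bigr]\big|_{p=2}=\tfrac{n}{4}\ln 2$ combine to exactly $\EntSub{e}{f}+\EntSub{e}{\hat f/\|\hat f\|}\ge n\ln 2$, i.e., Eq.~\ref{eq:uncertainty} in base two; the factorization $\|\hat f\|_{q}=\|\hat f\|_2\|h\|_q$ correctly removes the non-unit normalization of the transform. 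What the paper's route buys is brevity and generality (the cited theorem applies to any ``flat'' unitary, and the proof is a one-line computation of $M$); what your route buys is a self-contained argument with the constant's provenance fully visible, plus the sharpness verification via the equality cases $f=\delta$ and $f=\chi_z$, which the paper does not include. Since your interpolation-plus-differentiation mechanism is precisely the standard proof of the Dembo--Cover--Thomas theorem, the two approaches coincide at the core; yours simply inlines the citation.
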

\begin{proof}
  Let $U$ be a unitary $n$ by $n$ matrix such that
  $\max_{ij}|u_{ij}|=M$. Let $x \in \C^n$ be such that $\|x\|>0$. Then
  Theorem 23 in Dembo, Cover and Thomas~\cite{dembo1991information}
  states that
  \begin{align*}
    \Ent{\frac{x}{\|x\|}} + \Ent{\frac{Ux}{\|Ux\|}} \geq 2 \log(1/M),
  \end{align*}
  where for $x \in \C^n$ we define $\Ent{x} = -\sum_{i \in
    [n]}|x_i|^2\log |x_i|^2$.

  Let $F$ be the matrix representing the Fourier transform operator on
  $\Z_2^n$. Note that by our definition in Eq.~\ref{eq:fourier}, the
  transform operator $F$ is not unitary. However, if we multiply it by
  $\sqrt{2^n}$ (i.e., normalize the characters $\chi_y$) then it
  becomes unitary. The normalized matrix elements (which are equal to
  the elements of the normalized characters $\chi_y$), are all equal
  to $\pm 1/\sqrt{2^n}$. Hence $M = 1/\sqrt{2^n}$, and
  \begin{align*}
    \Ent{\frac{f}{\|f\|}}+\Ent{\frac{Ff}{\|Ff\|}} \geq 2 \log(1/M) =
    n.
  \end{align*}
\end{proof}

A distribution supported on a set of size $k$ has entropy at most
$\log k$, as can be shown by calculating its Kullback-Leibler
divergence from the uniform distribution (see,
e.g.,~\cite{cover1991elements}). Hence any distribution with entropy
$\log k$ has support of size at least $k$. This fact, together with
the discrete uncertainty principle, yields a proof of the following
claim (see Matolcsi~and~Szucs~\cite{matolcsi1973intersections} or
O'Donnell~\cite{odonnell12analysis} for an alternative proof of
Eq.~\ref{eq:support-uncertainty}.)

\begin{claim}
  \label{thm:support-uncertainty}
  For any non-zero function $f:\Z_2^n \to \R$ (i.e., $\|f\| > 0$) it
  holds that
  \begin{align}
    \label{eq:support-uncertainty}
    |\supp f| \cdot |\supp \hat{f}| \geq 2^n
  \end{align}
  and
  \begin{align}
    \label{eq:support-ent-uncertainty}
    |\supp f| \cdot 2^{\Ent{\hat{f}/ \|\hat{f}\|}} \geq 2^n.
  \end{align}
\end{claim}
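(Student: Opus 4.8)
The plan is to derive both inequalities directly from the discrete uncertainty principle (Theorem~\ref{thm:uncertainty}) together with the elementary bound, quoted in the paragraph preceding the claim, that a probability distribution supported on a set of size $k$ has entropy at most $\log k$. The second inequality~\eqref{eq:support-ent-uncertainty} is the stronger of the two, and I would prove it first; the first inequality~\eqref{eq:support-uncertainty} then falls out immediately.

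First I would record the interpretation of $\Ent{\cdot}$ as a Shannon entropy. For a non-zero $f$, set $p(x) = f(x)^2/\|f\|^2$. Since $\sum_x f(x)^2 = \|f\|^2$, the function $p$ is a genuine probability distribution on $\Z_2^n$, and by the definition in Eq.~\eqref{eq:entropy} one has
\begin{align*}
  \Ent{\frac{f}{\|f\|}} = -\sum_{x \in \Z_2^n} p(x)\log p(x),
\end{align*}
the ordinary Shannon entropy of $p$. Crucially, $p(x) = 0$ exactly when $f(x) = 0$, so $p$ is supported precisely on $\supp f$. The maximum-entropy bound therefore gives $\Ent{f/\|f\|} \leq \log|\supp f|$, and the identical argument applied to $\hat{f}$ gives $\Ent{\hat{f}/\|\hat{f}\|} \leq \log|\supp \hat{f}|$.

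With these in hand the two claims are a short chain of inequalities. For~\eqref{eq:support-ent-uncertainty} I would bound only the first term of the uncertainty principle, writing
\begin{align*}
  n \leq \Ent{\frac{f}{\|f\|}} + \Ent{\frac{\hat{f}}{\|\hat{f}\|}} \leq \log|\supp f| + \Ent{\frac{\hat{f}}{\|\hat{f}\|}},
\end{align*}
and then rearranging and exponentiating base two to obtain $2^n \leq |\supp f| \cdot 2^{\Ent{\hat{f}/\|\hat{f}\|}}$. Inequality~\eqref{eq:support-uncertainty} follows either by repeating the argument while bounding both entropy terms by the logarithms of their support sizes, or more simply by observing that $2^{\Ent{\hat{f}/\|\hat{f}\|}} \leq |\supp \hat{f}|$ in the bound just obtained.

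There is no real obstacle here: the whole content is packaged into Theorem~\ref{thm:uncertainty}, and what remains is bookkeeping. The only point deserving care is the one I isolated above — that the quantity $\Ent{f/\|f\|}$ defined in Eq.~\eqref{eq:entropy} coincides with the Shannon entropy of the distribution $f^2/\|f\|^2$ (so that the maximum-entropy bound legitimately applies) and that this distribution shares its support with $f$. Once that identification is made, both displays are immediate.
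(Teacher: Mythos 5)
Your proof is correct and follows essentially the same route as the paper's: apply the discrete uncertainty principle (Theorem~\ref{thm:uncertainty}) together with the bound $\Ent{f/\|f\|} \leq \log|\supp f|$, which the paper likewise justifies via the maximum-entropy property of distributions on a finite support. Your only addition is spelling out explicitly that $\Ent{f/\|f\|}$ is the Shannon entropy of the distribution $f^2/\|f\|^2$, a point the paper leaves implicit.
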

\begin{proof}
  By Theorem~\ref{thm:uncertainty} we have that
  \begin{align*}
    \Ent{\frac{f}{||f||}}+\Ent{\frac{\hat{f}}{||\hat{f}||}} \geq n.
  \end{align*}
  Since $\log |\supp(f)| = \log |\supp(f/||f||)| \geq \Ent{f/||f||}$
  then
  \begin{align*}
    |\supp f |\cdot 2^{\Ent{\hat{f}/ \|\hat{f}\|}} \geq 2^n
  \end{align*}
  and likewise
  \begin{align*}
    |\supp f |\cdot|\supp \hat{f} | \geq 2^n.
  \end{align*}
\end{proof}

We note that for the proof of Theorem~\ref{thm:is-or-far-from} we rely
on Claim~\ref{thm:support-uncertainty}, whereas for the more general
Theorem~\ref{conjecture:entropy}, using
Claim~\ref{thm:support-uncertainty} does not suffices and we must use
(the stronger) Theorem~\ref{thm:uncertainty}.

\subsection{Testing Booleanity given oracle access}
We begin by proving the following standard proposition, which relates
the support of functions $f$ and $g$ with the support of their
convolution.
\begin{proposition}
  \label{thm:conv-support}
  Let $g,f :\Z_2^n \to \R$.  Then
  \begin{align*}
    \supp (f*g) \subseteq \supp f + \supp g.
  \end{align*}
\end{proposition}
Here $\supp f + \supp g$ is the set of elements of $\Z_2^n$ that can
be written as the sum of an element in $\supp f$ and an element in
$\supp g$.
\begin{proof}
  Let $x \in \supp (f * g)$. Then, from the definition of convolution,
  there exist $y$ and $z$ such that $f(y) \neq 0$, $g(z) \neq 0$ and
  $x=y+z$. Hence $x \in \supp f + \supp g$.
\end{proof}

We consider a $k$-sparse function $f$ to which we are given oracle
access. We are asked to determine if it is Boolean, or more generally
if its image is in some small set $D$. We here think of $k$ as being
small - say polynomial in $n$.

We first prove the following combinatorial result:
\begin{theorem*}[\ref{thm:is-or-far-from}]
  Let $D \subset \R$ be a set with $d$ elements. Then for any
  $k$-sparse $f$ one of the following holds.
  \begin{itemize}
  \item Either $\Psub{x}{f(x) \in D} = 1$,
  \item or $\Psub{x}{f(x) \not \in D} \geq \frac{d!}{(k+d)^d}$,
  \end{itemize}
  where $\Psub{x}{\cdot}$ denotes the uniform distribution over the
  domain of $f$.
\end{theorem*}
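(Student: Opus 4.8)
The plan is to reduce the membership condition $f(x)\in D$ to the vanishing of a single auxiliary function and then apply the support form of the uncertainty principle. Write $D=\{a_1,\dots,a_d\}$ and set $P(t)=\prod_{i=1}^d(t-a_i)$, a univariate polynomial of degree $d$. Let $g:\Z_2^n\to\R$ be the pointwise composition $g(x)=P(f(x))$. Since $P(f(x))=0$ exactly when $f(x)$ is a root of $P$, i.e. when $f(x)\in D$, we have $\supp g=\{x: f(x)\notin D\}$. This already gives the dichotomy: if $g\equiv 0$ then the image of $f$ lies in $D$ and $\Psub{x}{f(x)\in D}=1$; otherwise $g$ is a nonzero function and it remains to lower bound $|\supp g|$.

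The key step is to bound the support of $\hat g$ in terms of $k$ and $d$. Expanding $P$ gives $g=\sum_{j=0}^d c_j f^{j}$, where $f^{j}$ denotes the pointwise $j$-th power. By linearity of the Fourier transform and the convolution theorem (Eq.~\ref{eq:convolutions}) one has $\widehat{f^{j}}=\hat f^{(j)}$, the $j$-fold convolution of $\hat f$ with itself, so $\hat g=\sum_{j=0}^d c_j\,\hat f^{(j)}$ and hence $\supp\hat g\subseteq\bigcup_{j=0}^d\supp\hat f^{(j)}$. Iterating Proposition~\ref{thm:conv-support}, $\supp\hat f^{(j)}$ is contained in the $j$-fold sumset of $S:=\supp\hat f$. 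The crucial observation is that in $\Z_2^n$ every element has order two, so any $j$-fold sum collapses to a sum over a \emph{subset} of $S$; consequently $\bigcup_{j=0}^d\supp\hat f^{(j)}$ is contained in the set of all subset sums of $S$ of size at most $d$. Since $|S|\le k$, this set has size at most $\sum_{i=0}^d\binom{k}{i}\le\binom{k+d}{d}\le\frac{(k+d)^d}{d!}$, where the middle inequality follows from Vandermonde's identity.

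Finally I apply Claim~\ref{thm:support-uncertainty}. As $g\neq 0$, Eq.~\ref{eq:support-uncertainty} gives $|\supp g|\cdot|\supp\hat g|\ge 2^n$, so
\begin{equation*}
  |\supp g|\ge\frac{2^n}{|\supp\hat g|}\ge 2^n\cdot\frac{d!}{(k+d)^d}.
\end{equation*}
Dividing by $2^n$ and recalling $\supp g=\{x:f(x)\notin D\}$ yields $\Psub{x}{f(x)\notin D}\ge d!/(k+d)^d$, as required. The main obstacle is the combinatorial bound of the second paragraph: one must recognize both that convolution supports pass to sumsets (Proposition~\ref{thm:conv-support}) and that over $\Z_2^n$ these sumsets collapse to subset sums, and then carry out the binomial estimate $\sum_{i=0}^d\binom{k}{i}\le (k+d)^d/d!$. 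Everything else is bookkeeping with the convolution theorem and a single invocation of the uncertainty principle.
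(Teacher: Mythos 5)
Your proof is correct and follows essentially the same route as the paper: the same auxiliary function $g=\prod_{i=1}^d(f-y_i)$, the same support bound on $\hat g$ via the convolution theorem and Proposition~\ref{thm:conv-support}, and the same single invocation of Claim~\ref{thm:support-uncertainty}. The only difference is in the counting step, and it is cosmetic: where the paper adjoins $0$ to $\supp \hat f$ and bounds $|dA|$ by the number of size-$d$ multisets, $\binom{k+d}{d}$, you exploit that every element of $\Z_2^n$ has order two to collapse sumsets to subset sums, getting the slightly sharper intermediate count $\sum_{i=0}^{d}\binom{k}{i}$ before bounding it by the same $\binom{k+d}{d}\le (k+d)^d/d!$.
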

\begin{proof}
  Let $D=\{y_1,\ldots,y_d\}$.  Denote
  \begin{align*}
    g = \prod_{i=1}^d(f-y_i),
  \end{align*}
  so that $g(x) = 0$ iff $f(x) \in D$. Then
  \begin{align*}
    \hat{g} &=
    \left(\hat{f}-y_1\delta\right)*\cdots*\left(\hat{f}-y_d\delta\right)
    = \hat{f}^{(d)}+a_{d-1}\hat{f}^{(d-1)}+ \cdots
    a_1\hat{f}+a_0\delta,
  \end{align*}
  for some coefficients $a_0,\ldots,a_{d-1}$. Therefore
  \begin{align*}
    \supp \hat{g} \subseteq \bigcup_{i=1}^d\supp \hat{f}^{(i)}
    \cup\{0\}.
  \end{align*}
  We show that $|\supp \hat{g}| \leq (k+d)^d/d!$. Let $A = \supp
  \hat{f} \cup \{0\}$. Then by Proposition~\ref{thm:conv-support}
  $\supp \hat{f}^{(i)}$ is a subset of $iA = A+\cdots+A$, where the
  sum is taken $i$ times; this is the set of elements in $\Z_2^n$ that
  can be written as a sum of $i$ elements of $A$. Hence
  \begin{align*}
    \supp \hat{g} \subseteq A \cup 2A \cup \cdots \cup dA.
  \end{align*}
  Since $0 \in A$, then for all $i \leq d$ we have that $iA \subseteq
  dA$. Hence
  \begin{align*}
    \supp \hat{g} \subseteq dA.
  \end{align*}
  Therefore $\supp \hat{g}$ is a subset of the set of elements that
  can be written as the sum of at most $d$ elements of $A$. This
  number is bounded by the number of ways to choose $d$ elements of
  $A$ with replacement, disregarding order. Hence
  \begin{align}
    \label{eq:supp-g}
    |\supp \hat{g}| \leq \binom{|A|-1+d}{d} \leq \frac{(k+d)^d}{d!},
  \end{align}
  since $|A| \leq |\supp \hat{f}| + 1=k+1$.

  Now, if $f(x) \in D$ for all $x \in \Z_2^n$, then clearly
  $\Psub{x}{f(x) \in D} = 1$. Otherwise, $g(x)$ is different than zero
  for some $x$, and so $\|g\| > 0$. Hence we can apply
  Claim~\ref{thm:support-uncertainty} and
  \begin{align*}
    |\supp g| \cdot |\supp \hat{g}| \geq 2^n.
  \end{align*}
  By Eq.~\ref{eq:supp-g} this implies that
  \begin{align*}
    |\supp g| \geq \frac{2^nd!}{(k+d)^d}.
  \end{align*}
  Since the support of $g$ is precisely the set of $x \in \Z_2^n$ for
  which $f(x) \notin D$ then it follows that
  \begin{align*}
    \Psub{x}{f(x) \not \in D} \geq \frac{d!}{(k+d)^d}.
  \end{align*}
\end{proof}

A consequence is that a function that is not Boolean (i.e., the case
$D=\{-1,1\}$) is not Boolean over a fraction of at least $2/(k+2)^2$
of its domain.  Theorem~\ref{thm:k-algo} is a direct consequence of
this result: assuming oracle access to $f$ (i.e., $O(1)$ time random
sampling), the algorithm samples $f$ at random $\half
(k+2)^2\ln(1/\eps)$ times, and therefore will discover an $x$ such
that $f(x) \not \in \{-1,1\}$ with probability at least $1-\eps$ -
unless $f$ is Boolean.

While we were not able to show a tight lower bound, we show that any
algorithm would require at least $\Omega(k)$ queries to perform this
task (even when two-sided error is allowed).

\begin{theorem*}[\ref{thm:k-lower-bound}]
  Let $A$ be a randomized algorithm that, given $k$ and oracle access
  to a $k$-sparse function $f$,
  \begin{itemize}
  \item returns {\em true} with probability at least $2/3$ if $f$ is
    Boolean, and
  \item returns {\em false} with probability at least $2/3$ if $f$ is
    not Boolean.
  \end{itemize}
  Then $A$ has query complexity $\Omega(k)$.
\end{theorem*}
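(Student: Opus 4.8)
The plan is to apply Yao's minimax principle: it suffices to exhibit a distribution $\mathcal D$ over $k$-sparse inputs, half Boolean and half not, such that every \emph{deterministic} algorithm making fewer than $ck$ queries errs with probability more than $1/3$ under $\mathcal D$. Any randomized tester succeeding with probability $2/3$ on every input would then need $\Omega(k)$ queries.

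First I would set $m = \lfloor \log_2 k\rfloor$, so that $k/2 < 2^m \le k$, and restrict attention to functions that depend only on the first $m$ coordinates (juntas). Any such junta has Fourier support contained in the subsets of $\{1,\dots,m\}$, hence is $2^m \le k$-sparse. The YES part of $\mathcal D$ is the single Boolean function $f \equiv 1$. The NO part picks a uniformly random point $z \in \{-1,1\}^m$ and sets $f \equiv 1$ everywhere except $f = 0$ on the fiber above $z$; this $f$ is again a junta on $m$ coordinates, hence $k$-sparse, and is non-Boolean since it takes the value $0$. We draw YES/NO with probability $\half$ each.

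The key step is the indistinguishability argument. Fix a deterministic algorithm $A$ making $q$ queries. On the all-ones transcript its query sequence is completely determined; let $P \subseteq \{-1,1\}^m$ be the set of first-$m$-coordinate projections of the queried points, so $|P| \le q$. On a YES input $A$ sees only ones and outputs some fixed answer $b$; on a NO input with hidden point $z$ the transcript is identical to the YES transcript — and hence $A$ outputs the same $b$ — unless $z \in P$. If $b = \text{true}$ then $A$ is wrong on every NO instance with $z \notin P$, an event of probability at least $1 - |P|/2^m \ge 1 - 2q/k$; if $b = \text{false}$ then $A$ is wrong on every YES instance. Either way the error probability under $\mathcal D$ is at least $\half(1 - 2q/k)$, which exceeds $1/3$ whenever $q < k/6$. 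Thus $q = \Omega(k)$.

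The main obstacle, which the construction is designed to finesse, is that Theorem~\ref{thm:bool-is-or-far-from} forces a non-Boolean $k$-sparse function to differ from Boolean on a $\ge 2/(k+2)^2$ fraction of the cube, so one cannot hide the defect on a vanishingly small set; the junta-on-$\log k$-coordinates encoding is exactly the regime where a single planted bad value is $k$-sparse yet lives in a truth table of size $\approx k$, making its discovery a needle-in-a-haystack search of size $k$. A second subtlety is adaptivity: because the YES and NO transcripts coincide entry-for-entry until the hidden point is queried, an adaptive strategy gains nothing over a non-adaptive one, which is what lets the crude counting bound $|P| \le q$ suffice. Knowledge of $k$ (hence of $m$ and of the relevant coordinates) likewise does not help, since the difficulty lies entirely in the random location of $z$.
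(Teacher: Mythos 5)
Your proof is correct and takes essentially the same route as the paper: both constructions hide a single non-Boolean value on a uniformly random fiber of a junta on the first $\approx\log_2 k$ coordinates (automatically $k$-sparse, since such a junta has at most $2^m\le k$ Fourier coefficients) and argue that any algorithm making $o(k)$ queries hits the planted fiber with probability $o(1)$, so the Boolean and non-Boolean cases are indistinguishable. Your variant --- a single all-ones YES function plus an explicit Yao/deterministic-transcript argument with concrete constants --- is if anything a cleaner formalization of the paper's more informal indistinguishability claim, which uses random Boolean values on the good fibers and the value $2$ on the bad one.
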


\begin{proof}[Proof of Theorem~\ref{thm:k-lower-bound}]
  Let $A$ be an algorithm that is given oracle access to a function
  $f:\Z_2^n \to \R$, together with the guarantee that $\supp \hat{f}
  \leq k$. When $f$ is Boolean then $A$ returns ``true''. When $f$ is
  not Boolean then $f$ returns ``false'' with probability at least
  $2/3$. We show that $A$ makes $\Omega(k)$ queries to $f$.

  Denote by $B_k$ the set of Boolean functions that depend only on the
  first $\log k$ coordinates. Denote by $C_k$ the set of functions
  that likewise depend only on the first $\log k$ coordinates, return
  values in $\{-1,1\}$ for some $k-1$ of the $k$ possible values of
  the first $\log k$ coordinates, but otherwise return $2$. Note that
  functions in both $B_k$ and $C_k$ have Fourier transforms of support
  of size at most $k$.


  We prove the lower bound on the query complexity of the randomized
  algorithm by showing two distributions, a distribution of Boolean
  functions and a distribution of non-Boolean functions, which are
  indistinguishable to any algorithm that makes a small number of
  queries to the input.  That is, we present two distributions: one
  for which the algorithm should return ``false'' (denoted by
  $\mathcal{D}_0$) and another for which the algorithm should return
  ``true'' (denoted by $\mathcal{D}_1$). We prove that any randomized
  algorithm which performs at most $o(k)$ queries would not be able to
  distinguish between the two distributions with non-negligible
  probability. This proves the claim.

  Let $\mathcal{D}_1$ be the uniform distribution over $B_k$, and let
  $\mathcal{D}_0$ be the uniform distribution over $C_k$.  Observe
  that an arbitrary query to $f$ in either distribution would output a
  non-Boolean value with probability at most $1/k$, independently of
  previous queries with different values of the first $\log k$
  coordinates. Therefore any algorithm that performs $o(k)$ queries
  would find an input for which $f(x)=2$ with probability $o(1)$, and
  would therefore be unable to distinguish between $\mathcal{D}_0$ and
  $\mathcal{D}_1$ with noticeable probability.
\end{proof}

\subsection{Proof of Theorem~\ref{conjecture:entropy}}
\label{sec:conditional-proof}
Recall the statement of Theorem~\ref{conjecture:entropy}.

\begin{theorem*}[\ref{conjecture:entropy}]
  Let $\Ent{\frac{\hat{f}*\hat{f}}{\|\hat{f}*\hat{f}\|}} \leq 2\log
  k$, and let $\|f\|^2=2^n$. Then $f$ is either $\eps$-close to
  Boolean, or satisfies
  \begin{equation*}
    \Psub{x}{f(x) \not \in \{-1,1\}} = \Omega\left(\frac{1}{
        k^{2(\eps^2+1)/\eps^2}}\right)
  \end{equation*}
  where $\Psub{x}{\cdot}$ denotes the uniform distribution over the
  domain of $f$.
\end{theorem*}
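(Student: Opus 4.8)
The plan is to apply the support--entropy uncertainty bound (Eq.~\ref{eq:support-ent-uncertainty} of Claim~\ref{thm:support-uncertainty}) to the ``defect'' function $g = f^2-1$, whose support is exactly $\{x : f(x)\not\in\{-1,1\}\}$, so that $\Psub{x}{f(x)\not\in\{-1,1\}} = |\supp g|/2^n$. By the convolution theorem (Eq.~\ref{eq:convolutions}) together with $\hat{\bf 1}=\delta$ (Eq.~\ref{eq:delta-transform}), the transform of $g$ is $\hat g = \hat f*\hat f - \delta$. Writing $h := \hat f*\hat f$, the whole argument rests on one clean observation: since $\|f\|^2=2^n$, Parseval gives $\|\hat f\|^2=1$, and therefore $h(0) = [\hat f*\hat f](0) = \sum_y \hat f(y)^2 = \|\hat f\|^2 = 1 = \delta(0)$. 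Hence $\hat g$ agrees with $h$ at every nonzero coordinate and vanishes at $0$; that is, passing from $h$ to $\hat g$ simply deletes the point mass that $h$ places at the origin.

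First I would record the two normalizations that translate the hypotheses into statements about $h$ and $\hat g$. The relation $\|\hat g\|^2 = \|g\|^2/2^n = \frac{1}{2^n}\sum_x(f(x)^2-1)^2$ shows that $f$ is $\eps$-close to Boolean precisely when $\|\hat g\|\le\eps$, so we may work in the complementary case $\|\hat g\|^2 > \eps^2$. Since $\|h\|^2 = \|\hat g\|^2 + h(0)^2 = \|\hat g\|^2 + 1$, the mass that the distribution $p_h(x):=h(x)^2/\|h\|^2$ places at the origin is $\alpha := 1/\|h\|^2 = 1/(\|\hat g\|^2+1)$.

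The main step is to control $\Ent{\hat g/\|\hat g\|}$ by the hypothesis $\Ent{h/\|h\|}\le 2\log k$. Because $\hat g/\|\hat g\|$ is obtained from $h/\|h\|$ by conditioning the squared-magnitude distribution on the event $\{x\neq 0\}$, the grouping (chain-rule) identity for Shannon entropy gives $\Ent{h/\|h\|} = H_{\mathrm b}(\alpha) + (1-\alpha)\,\Ent{\hat g/\|\hat g\|}$, where $H_{\mathrm b}$ denotes the binary entropy function. Dropping the nonnegative term $H_{\mathrm b}(\alpha)$ yields $\Ent{\hat g/\|\hat g\|} \le \Ent{h/\|h\|}/(1-\alpha) \le 2\log k/(1-\alpha)$. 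Now $1-\alpha = \|\hat g\|^2/(\|\hat g\|^2+1)$ is increasing in $\|\hat g\|^2$, so in the far case $\|\hat g\|^2 > \eps^2$ it exceeds $\eps^2/(\eps^2+1)$, whence $\Ent{\hat g/\|\hat g\|} < \frac{2(\eps^2+1)}{\eps^2}\log k$. Finally, applying Eq.~\ref{eq:support-ent-uncertainty} to $g$ (legitimate since $\|g\|>0$ in this case) gives $|\supp g|\cdot 2^{\Ent{\hat g/\|\hat g\|}} \ge 2^n$, so $\Psub{x}{f(x)\not\in\{-1,1\}} = |\supp g|/2^n \ge 2^{-\Ent{\hat g/\|\hat g\|}} > k^{-2(\eps^2+1)/\eps^2}$, as claimed.

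The step I expect to be the crux, and the source of the $\eps$-dependent exponent, is the amplification factor $1/(1-\alpha)$ introduced by conditioning. As $f$ approaches Boolean we have $\|\hat g\|\to 0$, so $\alpha\to 1$ and this factor blows up --- which is exactly why the theorem must be phrased as a dichotomy against $\eps$-closeness rather than against exact Booleanity: only the lower bound $\|\hat g\|>\eps$ keeps $1-\alpha$ bounded away from $0$ and the entropy $\Ent{\hat g/\|\hat g\|}$ under control. A secondary point to verify carefully is the exact identity $h(0)=1$, which is what makes ``subtract $\delta$'' coincide with ``condition on $x\neq 0$''; without the hypothesis $\|f\|^2=2^n$ this clean conditioning picture would be replaced by a messier perturbation estimate.
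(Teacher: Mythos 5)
Your proposal is correct and takes essentially the same route as the paper: you set $g=f^2-1$, use $\|f\|^2=2^n$ to get $[\hat f*\hat f](0)=1$ so that $\hat g$ is $\hat f*\hat f$ with the origin's point mass deleted, bound $\Ent{\hat g/\|\hat g\|}$ by the amplification factor $\frac{\eps^2+1}{\eps^2}$ (your grouping identity is exactly the paper's Proposition~\ref{thm:H-H} with the binary-entropy term retained before being dropped), and finish with Eq.~\ref{eq:support-ent-uncertainty}. The only cosmetic difference is that you obtain $\|\hat f*\hat f\|^2=1+\|\hat g\|^2>1+\eps^2$ directly, whereas the paper computes the same quantity as $\frac{1}{2^n}\|f^2\|^2$ via the convolution theorem.
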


We begin by proving a preliminary proposition.
\begin{proposition}
  \label{thm:H-H}
  Let $X$ be a discrete random variable, and let $x_0$ be a value that
  $X$ takes with positive probability. Then
  \begin{align*}
    H(X|X \neq x_0) \leq \frac{H(X)}{\P{X \neq x_0}}.
  \end{align*}
\end{proposition}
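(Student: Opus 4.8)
The plan is to reduce the inequality to the nonnegativity of a single entropy term by grouping the outcomes of $X$ according to whether or not they equal $x_0$, i.e.\ by applying the chain rule for Shannon entropy. First I would introduce the indicator random variable $Y = \mathbf{1}[X = x_0]$. Since $Y$ is a deterministic function of $X$, the pair $(X,Y)$ carries exactly the same information as $X$, so $H(X,Y) = H(X)$, and the chain rule gives
\[
  H(X) = H(Y) + H(X \mid Y).
\]

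Next I would expand the conditional entropy over the two values of $Y$. Writing $q = \P{X \neq x_0}$, we have
\[
  H(X \mid Y) = (1-q)\, H(X \mid X = x_0) + q\, H(X \mid X \neq x_0).
\]
The one observation needed here is that conditioned on $X = x_0$ the variable $X$ is constant, so $H(X \mid X = x_0) = 0$; this is what makes the decomposition useful. Substituting back into the chain-rule identity yields the exact equality
\[
  H(X) = H(Y) + q\, H(X \mid X \neq x_0).
\]

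Finally, since $H(Y) \geq 0$ for every random variable, dropping this term gives $H(X) \geq q\, H(X \mid X \neq x_0)$, and dividing by $q = \P{X \neq x_0}$, which is strictly positive whenever the conditional entropy on the left is defined, produces the claimed bound. I do not expect a genuine obstacle: the result is elementary, and the only real idea is choosing the right grouping (the indicator of $\{X = x_0\}$) so that the chain rule isolates $q\,H(X\mid X\neq x_0)$ and leaves behind a nonnegative remainder. As a sanity check one can instead expand both sides directly in terms of the point masses $p_x = \P{X = x}$; the difference $H(X) - q\,H(X \mid X \neq x_0)$ then collapses to $-(1-q)\log(1-q) - q\log q$, which is precisely the binary entropy $H(Y)$ and hence nonnegative, confirming the chain-rule computation.
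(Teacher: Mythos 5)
Your proof is correct and essentially identical to the paper's: both condition on the indicator of the event $\{X = x_0\}$, use $H(X \mid X = x_0) = 0$, and discard a nonnegative term (the paper writes this as $H(X) \geq H(X \mid A)$, which for $A$ a function of $X$ is exactly your chain-rule identity $H(X) = H(Y) + H(X \mid Y)$ with $H(Y) \geq 0$ dropped). Your version even records the slightly stronger exact equality $H(X) = H(Y) + \P{X \neq x_0}\,H(X \mid X \neq x_0)$, but the argument is the same.
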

\begin{proof}
  Let $A$ be the indicator of the event $X=x_0$. Then
  \begin{align*}
    H(X) & \geq H(X|A) \\ &= \P{X=x_0}H(X|X=x_0) + \P{X\neq x_0}H(X|X
    \neq x_0)\\ &= \P{X\neq x_0}H(X|X \neq x_0),
  \end{align*}
  since $H(X|X=x_0)=0$.
\end{proof}
\begin{proof}[Proof of Theorem~\ref{conjecture:entropy}]
  Assume that $f$ is $\eps$-far from being Boolean. Observe that
  \begin{align}
    \label{eq:conv-eps}
    \|\hat{f}^{(2)}\|^2 = \frac{1}{2^n}\|f^2\|^2 =
    \frac{1}{2^n}\sum_{x \in \Z_2^n}f(x)^4 = \frac{1}{2^n}\sum_{x \in
      \Z_2^n}(f(x)^2-1)^2 + 1 \ge 1+\eps^2,
  \end{align}
  where the equality before last follows from the fact that
  $\|f\|^2=2^n$.

  Let $X$ be a $\Z_2^n$-valued random variable such that $\P{X=x}=
  \hat{f}^{(2)}(x)^2/\|\hat{f}^{(2)}\|^2$.  Since $f$ is normalized,
  then $\hat{f}^{(2)}(0)=1$. Furthermore,
  \begin{align*}
    \P{X \neq 0} = 1-\P{X = 0} =
    1-\frac{\hat{f}^{(2)}(0)^2}{\|\hat{f}^{(2)}\|^2} \ge
    \frac{\eps^2}{\eps^2+1},
  \end{align*}
  since $\hat{f}^{(2)}(0)=1$, and by Eq.~\ref{eq:conv-eps}.

  Let $g=f^2-1$. Then $\hat{g} = \hat{f}^{(2)}-\delta$, $\hat{g}(0) =
  0$, and $\P{X=x|X \neq 0}=\hat{g}(x)^2/\|\hat{g}\|^2$. Hence by
  Proposition~\ref{thm:H-H} it follows that
  \begin{align*}
    \Ent{\frac{\hat{g}}{\|\hat{g}\|}} \leq
    \Ent{\frac{\hat{f}^{(2)}}{\|\hat{f}^{(2)}\|^2}} \cdot
    \frac{\eps^2+1}{\eps^2} \leq 2 \frac{\eps^2+1}{\eps^2}\log k,
  \end{align*}
  where the second inequality follows from the proposition hypothesis
  that
  \begin{equation*}
    \Ent{\frac{\hat{f}*\hat{f}}{\|\hat{f}*\hat{f}\|}} \leq 2\log k.
  \end{equation*}

  By Claim~\ref{thm:support-uncertainty} it follows that
  \begin{equation*}
    |\supp
    g| \cdot 2^{\Ent{\hat{g}/\|\hat{g}\|}} \geq 2^n.
  \end{equation*}
  Hence $|\supp (f^2-1)| \cdot k^{2(\eps^2+1)/\eps} \geq 2^n$, from
  which the proposition\ follows directly, since
  \begin{equation*}
    \Psub{x}{f(x) \not
      \in \{-1,1\}} = \frac{|\supp (f^2-1)|}{2^n}.
  \end{equation*}
\end{proof}

\section{Acknowledgments}
The authors would like to thank Elchanan Mossel for a helpful initial
discussion of the problem and for suggesting the application to
property testing. We would like to thank Adi Shamir for suggesting the
relevance to cryptography, and we would like to thank Oded Goldreich
for discussions regarding the relevance to property testing.  Last, we
would like to thank the anonymous referees for the helpful comments
that allowed us to improve the presentation of the results.

\bibliographystyle{abbrv} \bibliography{booleanity}
\end{document}